\begin{document}

\title{Torus Solutions to the Weierstrass-Enneper Representation of Surfaces}
\author{Christopher Levi Duston}
\email{dustonc@merrimack.edu}
\affiliation{Physics Department, Merrimack College, N Andover MA, 01845}


\newtheorem{theorem}{Theorem}
\newtheorem{definition}{Definition}
\newtheorem{lemma}{Lemma}
\newtheorem{prop}{Proposition}

\date{\today}

\begin{abstract}
In this paper we present a torus solution to the generalized Weierstrass-Enneper representation of surfaces in $\mathbb{R}^4$. The key analytical technique will be Bloch wave functions with complex wave vectors. We will also discuss some possible uses of these solutions which derive from their explicit nature, such as Dehn surgery and the physics of exotic smooth structure.
\end{abstract}

\maketitle


\section{Background}
The Weierstrass-Enneper representation of a constant mean curvature (CMC) surface is a classical approach to explicitly specify the coordinates of such a surface in $\mathbb{R}^3$ \cite{Oprea-2007}. For a holomorphic function $f$ and a meromorphic function $g$ such that $fg^2$ is holomorphic, the coordinates of a minimal surface (that is, zero mean curvature) are given by
\[x(z,\bar{z})=\left(\text{Re}\int f(1-g^2)dz',\text{Re}\int if(1+g^2)dz',\text{Re}2\int fgdz'\right).\]
This representation has been extended to surfaces of any constant mean curvature using a linear system which formally resembles a Dirac equation \cite{Konopelchenko-1996},
\[\partial_z \psi=p\phi,\qquad \partial_{\bar{z}}\phi=-p\psi,\]
where the coordinates are specified by
\[x_1+ix_2=i\int_{\Gamma}(\bar{\psi}^2dz'-\bar{\phi}^2d\bar{z}'),\quad x_1-ix_2=i\int_{\Gamma}(\phi^2dz'-\phi^2d\bar{z}'),\quad x_3=-\int_\Gamma (\bar{\phi}\psi dz'+\psi\bar{\phi}d\bar{z}').\]
Here $\psi$ and $\phi$ are complex functions and $p(z,\bar{z})$ is a real-valued function. The coordinates do not depend on the choice of curve $\Gamma\in \mathbb{C}$, but only on the end points. It should also be emphasized that any constant mean curvature surface in $\mathbb{R}^3$ can be represented in this manner.

There are a number of further generalizations of this system, which we will collectively refer to as the Konopelchenko-Landolfi (KL) system, and will be the starting point for our analysis. By considering a set of complex functions that satisfy the Dirac-like equation
\begin{equation}\label{eq:Dirac}
  \partial_z\phi_\alpha=p\phi_\alpha,\qquad \partial_{\bar{z}}\phi_\alpha=-p\psi_{\alpha},\qquad \alpha=1,2,
  \end{equation}
a conformally immersed surface in $\mathbb{R}^4$ is specified by the following coordinates \cite{Konopelchenko-Landolfi-1999,Chen-Chen-2007}:
\begin{align}\label{eq:Weier}
\begin{split}
  x_1&=\frac{i}{2}\int_{\Gamma}\left\{\left(\bar{\psi}_1\bar{\psi}_2+\phi_1\phi_2\right)dz'-\left(\psi_1\psi_2+\bar{\phi}_1\bar{\phi}_2\right)d\bar{z}'\right\},\\
  x_2&=\frac{1}{2}\int_{\Gamma}\left\{\left(\bar{\psi}_1\bar{\psi}_2-\phi_1\phi_2\right)dz'+\left(\psi_1\psi_2-\bar{\phi}_1\bar{\phi}_2\right)d\bar{z}'\right\},\\
   x_3&=-\frac{1}{2}\int_{\Gamma}\left\{\left(\bar{\psi}_1\psi_2+\bar{\psi}_2\phi_1\right)dz'+\left(\psi_1\bar{\phi}_2+\phi_2\bar{\phi}_1\right)d\bar{z}'\right\},\\
     x_4&=-\frac{i}{2}\int_{\Gamma}\left\{\left(\bar{\psi}_1\psi_2-\bar{\psi}_2\phi_1\right)dz'-\left(\psi_1\bar{\phi}_2-\phi_2\bar{\phi}_1\right)d\bar{z}'\right\}.
\end{split}
\end{align}
  The induced metric on the surface is given by
  \[ds^2=u_1u_2dzd\bar{z},\]
  where $u_\alpha=|\psi_\alpha|^2+|\phi_\alpha|^2$. This has been the starting point for a number of studies; a partial list includes the integrable deformation of surfaces \cite{Taimanov-1997a,Konopelchenko-Landolfi-1999}, sigma-model dynamics \cite{Bracken-Grundland-Martina-1999,Bracken-Grundland-2002}, foliations \cite{Friedrich-1998}, cosmic strings \cite{Duston-2013}, and exotic smooth structure \cite{Asselmeyer-Maluga-Brans-2011, Asselmeyer-Maluga-Brans-2015}. The properties of the representations and particular solutions have also been explored \cite{Bracken-Grundland-1999, Bracken-Grundland-2001,Bracken-Grundland-2002,Bracken-2018}.

  Most known solutions to the KL equations are explicit solutions to the Dirac equation (\ref{eq:Dirac}), which means the resulting surface is analytically specified but it's precise geometric properties are unknown. When the KL equations (\ref{eq:Weier}) can be integrated explicitly, the surface can at least be parametrized. For example, by adding additional differential constraints\cite{Bracken-Grundland-2002}, one can find a rational solution of the form
  \[\frac{\psi_\alpha}{\bar{\phi}_\alpha}=e^z,\]
  which results in a CMC-surface parametrized by
  \[x_1^2+x_2^2=t(t-1)^2(3+3t+t^2)^2,\qquad x_3^2+x_4^2=4(1+t)^6,\qquad t=e^{z+\bar{z}}.\]
  However, there are no explicit torus solutions to the KL system. The algebraic parametrization of such a solution should be far simpler then the example above, but the KL representation of a torus is not immediately obvious. In this paper we will present such a solution, which will be a new solution to the KL system, as well as augment the list of applications given above. Furthermore, tori in $\mathbb{R}^4$ play a special role in differential geometry since they can be used to construct arbitrary 3-manifolds via Dehn surgery. We will briefly discuss this interesting application after the presentation of the main results.
  
\section{Bloch Wave}
To generate our torus solution, we will use Bloch waves for inspiration. These are solutions to wave functions in periodic lattice structures taking the form
\[\phi(x)=e^{-ikx}u(x),\]
for the arbitrary periodic function $u(x)$. We expect our solutions to be doubly-periodic in nature, and in fact the specific conditions on the Dirac equation (\ref{eq:Dirac}) have been previously specified \cite{Taimanov-1997}:
\[p(z+\gamma)=p(z)\qquad \phi_j(z+\gamma)=\epsilon(\gamma)\phi_j(z),\qquad \epsilon(\gamma)=\pm 1.\]
Here $\gamma\in\Lambda$, a lattice of rank 2. As ansatz we will choose solutions to (\ref{eq:Dirac}) as
\begin{align}\label{eq:Ansatz}
  \begin{split}
  \phi_\alpha&=A_\alpha\exp \left(i(k_\alpha z+h_\alpha \bar{z}\right)\\
  \psi_\alpha&=B_\alpha \exp \left( i(k_\alpha z+ h_\alpha \bar{z} \right).
\end{split}
  \end{align}
Here the constants $A_\alpha$, $B_\alpha$ and wave vectors $k_\alpha$, $h_\alpha$ are unspecified, but we will need to choose them in $\mathbb{R}$ or $\mathbb{C}$ as required. The double-periodicity will be enforced by choosing the wave vectors to satisfy conditions of the form
\[k_\alpha \gamma-h_\alpha \gamma=n\pi.\]

\section{Torus Solutions}
First we will require our ansatz (\ref{eq:Ansatz}) to satisfy the Dirac equation (\ref{eq:Dirac}). Doing that we arrive at the conditions
\[|p|^2=k_1h_1=k_2h_2.\]
Since the potential $p$ shows up in the metric, the resulting geometry will depend on the choice of wave vectors, which in turn depend on the underlying lattice structure. This underlying lattice structure will be described with a complex number 
\[\gamma=\Lambda_1+i\Lambda_2\]
defining the fundamental period.

\begin{theorem}
  The set of equations
  \begin{align}\label{eq:WeierProof}
    \begin{split}
      \phi_1&=C\exp\left(i(k_1z+h_1\bar{z}))\right),\\
      \phi_2&=\frac{2(k_1+k_2)}{C}\exp\left(i(k_1z+h_2\bar{z})\right),\\
      \psi_1&=-C\exp\left(i(k_1z+h_1\bar{z})\right),\\
      \psi_2&=\frac{2(h_1+h_2)}{C}\exp\left(i(k_2z+h_2\bar{z})\right),
    \end{split}
  \end{align}
  with constant $C\in\mathbb{C}$ and evanescent wave vectors
  \begin{align*}
    k_1&=a+ib\\
    h_1&=\left(\frac{n\pi}{\Lambda_1}-a\right)+i\left\{\left(\frac{n\pi}{\Lambda_1}-2a\right)\frac{\Lambda_2}{\Lambda_1}-b\right\}\\
    k_2&=\frac{n\pi}{2\Lambda_1}+i\left(\frac{n\pi}{2\Lambda_1}-a\right)\frac{\Lambda_2}{\Lambda_2}\\
    h_2&=\left(a-\frac{n\pi}{2\Lambda_1}\right)+i\left(a+\frac{n\pi}{2\Lambda_1}\right)\frac{\Lambda_2}{\Lambda_1}.
  \end{align*}
  with $a,b\in\mathbb{R}$ and $n\in\mathbb{Z}$ describe a torus under the Weierstrass representation (\ref{eq:Weier}). 
  \end{theorem}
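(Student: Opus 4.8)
The plan is to verify the two defining properties of the claimed solution: first, that the functions in (\ref{eq:WeierProof}) actually solve the Dirac system (\ref{eq:Dirac}) for some real potential $p$, and second, that the resulting coordinate integrals (\ref{eq:Weier}) produce a doubly-periodic immersion, i.e.\ that the image is topologically a torus. I would begin by substituting the ansatz (\ref{eq:Ansatz}) with the specified amplitudes $A_1=C$, $A_2=2(k_1+k_2)/C$, $B_1=-C$, $B_2=2(h_1+h_2)/C$ into $\partial_z\phi_\alpha=p\psi_\alpha$ and $\partial_{\bar z}\psi_\alpha=-p\phi_\alpha$. Since $\partial_z$ of an exponential just pulls down a factor $ik_\alpha$ (and $\partial_{\bar z}$ a factor $ih_\alpha$), each Dirac equation collapses to an algebraic relation among the amplitudes and wave vectors; the consistency of the four equations is exactly the condition $|p|^2=k_1h_1=k_2h_2$ already derived in the text, together with the requirement that $p$ come out real. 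Here I would need to check that the specific values of $k_2,h_2$ in the statement are engineered so that $k_1h_1=k_2h_2$ holds identically in $a,b,n,\Lambda_1,\Lambda_2$, and that $p$ (read off as, say, $ik_1 B_1/A_1$ appropriately) is a genuine real-valued function of $z,\bar z$ — most likely a constant, given the plane-wave form.

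Next I would establish double-periodicity. The lattice is generated by $\gamma=\Lambda_1+i\Lambda_2$ (and presumably a second generator, e.g.\ purely real or the complex conjugate period); writing $z=x+iy$, the phase $k_\alpha z+h_\alpha\bar z$ has real part and imaginary part that are real-linear in $x,y$. A shift $z\mapsto z+\gamma$ multiplies each $\phi_\alpha,\psi_\alpha$ by $\exp(i(k_\alpha\gamma+h_\alpha\bar\gamma))$, and I would show that the chosen wave vectors make every such factor equal to $\pm1$ — this is the content of the periodicity constraint $k_\alpha\gamma-h_\alpha\bar\gamma=n\pi$ (rewritten correctly with $\bar\gamma$) flagged in Section 2. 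Concretely, one computes $\mathrm{Re}(k_\alpha\gamma+h_\alpha\bar\gamma)$ and $\mathrm{Im}(k_\alpha\gamma+h_\alpha\bar\gamma)$ from the explicit real and imaginary parts of $k_\alpha,h_\alpha$ and checks the imaginary part vanishes (so the factor has modulus one, i.e.\ the waves are bounded — the ``evanescent''/Bloch feature) while the real part is an integer multiple of $\pi$ (so the factor is $\epsilon(\gamma)=\pm1$). Since the bilinears in (\ref{eq:Weier}) are quadratic in the $\phi,\psi$, each integrand picks up $\epsilon(\gamma)^2=+1$ under the shift, hence is strictly periodic, and the coordinate functions $x_j$ — being integrals of these periodic one-forms whose periods over the two cycles one must check vanish — descend to well-defined maps on $\mathbb{C}/\Lambda$.

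The final step is to confirm the immersion is genuinely an embedding (or at least an immersion) of the torus, not a degenerate or singular image. For this I would invoke the induced metric $ds^2=u_1u_2\,dz\,d\bar z$ with $u_\alpha=|\psi_\alpha|^2+|\phi_\alpha|^2$: because the wave vectors have nonzero imaginary parts the moduli $|\phi_\alpha|,|\psi_\alpha|$ are (bounded) exponentials that never vanish, so $u_1u_2>0$ everywhere and the map is a conformal immersion; regularity of the metric plus periodicity then gives an immersed torus, and one can note that for generic parameters $a,b$ the image has no self-intersections. I expect the main obstacle to be the periodicity bookkeeping: one must be careful that the ``period'' relation is stated with $\bar\gamma$ rather than $\gamma$ (the excerpt's $k_\alpha\gamma-h_\alpha\gamma=n\pi$ looks like a typo for $k_\alpha\gamma+h_\alpha\bar\gamma\in\pi\mathbb{Z}$), that \emph{both} lattice generators are handled, and that the additive periods of the integrated coordinate one-forms around the two homology cycles actually vanish so that $x_j$ — and not merely $dx_j$ — is periodic; checking this vanishing, rather than the algebra of the Dirac equation, is where the real work lies.
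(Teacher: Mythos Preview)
Your plan is sound but takes a genuinely different route from the paper. The paper never argues abstractly via periodicity of the spinors and vanishing of cycle periods; instead it \emph{integrates the coordinate one-forms explicitly}. The key device you are missing is that the amplitude choices $A_1A_2=2(k_1+k_2)$, $B_1B_2=-2(h_1+h_2)$ (and analogous relations for the mixed products) are made precisely so that, with $u=\exp(i\eta)$, $\eta=(k_1+k_2)z+(h_1+h_2)\bar z$, the $x_1$-integrand collapses to the exact differential $du+d\bar u$; likewise a second variable $w=\exp(i\rho)$, $\rho=(k_1-\bar h_2)z+(h_1-\bar k_2)\bar z$, handles $x_3,x_4$. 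One then reads off $x_1=2\Re u$, $x_2=-2\Im u$, $x_3=-2\Im w$, $x_4=2\Re w$, and the algebraic identities $x_1^2+x_2^2=4$, $x_3^2+x_4^2=4$ exhibit the image as a Clifford torus in $\mathbb{R}^4$. Only \emph{after} this does the paper impose the Bloch periodicity on the lattice to pin down the explicit form of $k_1,h_1,k_2,h_2$; the consistency conditions $\bar k_1-h_2=h_1+h_2$, $h_1-\bar k_2=\bar k_1+\bar k_2$ that you would need anyway appear here as what makes the two change-of-variable substitutions compatible.

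Your structural argument (Dirac check, Bloch sign-periodicity of the spinors, then nondegeneracy of $u_1u_2$) is more systematic and would indeed yield an immersed torus, but the step you flag as ``where the real work lies'' --- showing the integrated periods of the coordinate one-forms over both homology cycles vanish --- cannot be done without essentially computing the antiderivatives, which is exactly the paper's shortcut. In other words, the paper's explicit integration \emph{is} the period check, done in closed form, and it buys you more: an identification of the image surface, not merely of the domain. Your approach would prove the theorem; the paper's proves it and tells you which torus you get.
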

\begin{proof}
  First assume the wave vectors in the ansatz (\ref{eq:Ansatz}) could be complex (evanescent), and we will prove the above represents a torus by directly finding the coordinates (\ref{eq:Weier}). Defining
  \[\eta=(k_1+k_2)z+(h_1+h_2)\bar{z},\]
  the first coordinate takes the following form:
  \[x_1=\frac{i}{2}\int_\Gamma \left\{A_1A_2\exp(i\eta)dz'-\bar{A}_1\bar{A}_2\exp(-i\bar{\eta})d\bar{z}'-B_1B_2\exp(i\eta)d\bar{z}'+\bar{B}_1\bar{B}_2\exp(-i\eta)dz\right\}.\]
  To facilitate this integration, define a new variable $u=\exp(i\eta)$, and by setting the constants with
  \begin{equation}\label{eq:AB1}
    A_1A_2=2(k_1+k_2),\qquad B_1B_2=-2(h_1+h_2),
    \end{equation}
  the first coordinate simply becomes
  \[x_1=\int_{\Gamma}(du+d\bar{u})=2 \Re\int_\Gamma du=2\Re(u),\]
  where we have chosen an arbitrary integration path from the origin to $z$. The calculation of the second coordinate proceeds similarly,
  \[x_2=i(u-\bar{u})=-2\Im(u).\]

  For the second pair of coordinates, we want to define a variable
  \[\rho=(k_1-\bar{h}_2)z+(h_1-\bar{k}_2)\bar{z},\]
  so that
  \[x_3=-\frac{1}{2}\int_\Gamma \left\{\bar{B}_1A_2\exp(-i\bar{\rho})dz'+\bar{B}_2A_1\exp(i\rho)dz+B_1\bar{A}_2\exp(i\rho)d\bar{z}'+B_2\bar{A}_1\exp(-i\rho)d\bar{z}'\right\}.\]
  Then by setting the conditions
  \begin{equation}\label{eq:AB2}
    -\frac{1}{2}\bar{B}_1A_2=(\bar{h}_1-k_2),\qquad -\frac{1}{2}\bar{B}_2A_1=-(k_1-\bar{h}_2),
    \end{equation}
  and using $w=\exp(i\rho)$, we have
  \[x_3=i\int_\Gamma dw-d\bar{w}=-2\Im(w)\]
  \[x_4=\int_{\Gamma}dw+d\bar{w}=2\Re(w).\]

  To get the precise form of the functions (\ref{eq:WeierProof}), we use the relationships (\ref{eq:AB1}) and (\ref{eq:AB2}), along with the following consistency conditions
  \begin{equation}\label{eq:ConsCond}
    \bar{k}_1-h_2=h_1+h_2,\qquad h_1-\bar{k}_2=\bar{k}_1+\bar{k}_2
    \end{equation}
  and by setting $A_1=C$.
  
  Defining a real parameter $t=z+\bar{z}$, these coordinates can be rewritten as
  \begin{align*}
      x_1&=2\cos((k_1+k_2)t),\\
      x_2&=-2\sin((k_1+k_2)t),\\
      x_3&=-2\sin((h_1+h_2)t),\\
      x_4&=2\cos((h_1+h_2)t).
    \end{align*}
  It is now easy to see that these coordinates define a flat torus, since
  \[x_1^2+x_2^2=4,\qquad x_3^2+x_4^2=4.\]

  Now to determine conditions placed on the complex wave vectors, we consider the periodic condition, first just on $\phi_1$ (which will have the same structure as $\psi_1$):
  \[\phi_1(z+\gamma)=\exp(i(k_1 z+h_1\bar{z}))\exp(ik_1(\Lambda_1+i\Lambda_2)+ih_1(\Lambda_1-i\Lambda_2)).\]
  The exponent in the second term must be equal to $in\pi$ for $n\in\mathbb{Z}$.
  By explicitly breaking the wave vectors into real and imaginary parts,
  \[k_1=a+ib,\qquad h_1=c+id,\]
  the parameters must satisfy
  \[(a+c)\Lambda_1=n\pi,\qquad (b+d)\Lambda_1=(c-a)\Lambda_2.\]
  Therefore these two wave vectors only depend on two real parameters $a,b\in\mathbb{R}$:
  \[k_1=a+ib\qquad h_1=\left(\frac{n\pi}{\Lambda_1}-a\right)+i\left\{\left(\frac{n\pi}{\Lambda_1}-2a\right)\frac{\Lambda_2}{\Lambda_1}-b\right\}\]
  Following the same approach for $\psi_2$ (and therefore $\phi_2$), we start with
  \[k_2=a_2+ib_2,\qquad h_2=c_2+id_2,\]
  use the consistency conditions (\ref{eq:ConsCond}), we determine the relationship of these $a_2,b_2,c_2,d_2\in\mathbb{R}$ with $a$ and $b$, as shown in the statement of the theorem.
  
  \end{proof}

We also note that if we are going to require that the metric components be real,
\[p^2=k_1h_1=a\left(\frac{n\pi}{\Lambda_1}-a\right)-b\left\{\left(\frac{n\pi}{\Lambda_1}-2a\right)\frac{\Lambda_2}{\Lambda_1}-b\right\}+i\left[\left\{\left(\frac{n\pi}{\Lambda_1}-2a\right)\frac{\Lambda_2}{\Lambda_1}-b\right\}a+b\left(\frac{n\pi}{\Lambda_1}-a\right)\right],\]
  we must enforce the conditions
  \[b=\frac{\Lambda_2}{\Lambda_1}a,\qquad a\neq \frac{n\pi}{2\Lambda_1}.\]
  So the first wave vectors can be equivalently written as
  \[k_1=a+i\frac{\Lambda_2}{\Lambda_1}a,\qquad h_2=\left(\frac{n\pi}{\Lambda_1}-a\right)+i\left\{\left(\frac{n\pi}{\Lambda_1}-3a\right)\frac{\Lambda_2}{\Lambda_1}\right\}.\]
  This means the metric component is simply a polynomial in the wave vector parameter $a$:
  \[p^2=\frac{\Lambda_2}{\Lambda_1}\left(1-\left(\frac{\Lambda_2}{\Lambda_1}\right)^2\right)a+\left(3\left(\frac{\Lambda_2}{\Lambda_1}\right)^2-1\right)a^2.\]

\section{The Metric Under a Dehn Twist}
A $(p,q)$ Dehn twist is a surgery on manifolds in which the torus is removed, twisted by
\[(\Lambda_1,\Lambda_2)\to(p\Lambda_1,q\Lambda_2),\]
and glued back in. Interestingly, the metric is not invariant under this surgery:
\[p'^2=\frac{n\pi}{p\Lambda_1}\left(1-\left(\frac{q\Lambda_2}{p\Lambda_1}\right)^2\right)a+\left(3\left(\frac{q\Lambda_2}{p\Lambda_1}\right)^2-1\right)a^2.\]
In general, the metric will only be invariant in the trivial $p=q=1$ case, or for $p=q$ when the wave vector parameter $n=p$. Other cases will depend on the choice of $\Lambda_1$ and $\Lambda_2$.

\section{Summary}
We have presented a new solution to the generalized Weierstrass-Enneper system, in which we've explicitly specified the wave functions and demonstrated that they describe a torus in $\mathbb{R}^4$. Besides adding to the body of knowledge of this system, which has found application in a variety of mathematical subfields, we have shown that under surgery, the metric of our presentation is generically modified. This is significant, since this surgery typically takes place at the level of topology - for example, $(2,1)$ surgery on $\mathbb{S}^3$ results in $\mathbb{RP}^3$\cite{Gompf-Stipsicz-1999}. However, since any closed 3-manifold can be realized in this way \cite{Rohlin-1951}, we know the geometry is indeed changed under it. It is our hope that by using the Weierstrass-Enneper representation to make the change in the geometry explicit (at the level of the metric), this representation can be used to study other systems in which the Dehn surgery plays a key role. A illustrative example is found in the study of exotic smooth structure\cite{Asselmeyer-Maluga-Brans-2014} (or in a similar way with Fintushel-Stern knot surgery\cite{Asselmeyer-Maluga-Brans-2015}), where this technique could be used to study the physical properties of manifolds which are abstractly generated via this surgery.





\bibliography{WeierstrassTorusBib}

\end{document}